\newtheorem{theorem}{Theorem}
\newtheorem{lemma}{Lemma}
\newtheorem{definition}{Definition}
\def\reff#1{(\ref{#1})}
\def\sH{\mathcal{H}}
\newcommand*{\E}{\cal{E}}
\newcommand{\tomega}{{{\omega}}}
\newcommand{\supp}{\rm{supp}}
\newcommand*{\eps}{\varepsilon}
\newcommand*{\id}{I}
\newcommand*{\ket}[1]{| #1 \rangle}
\newcommand{\be}{\begin{equation}}
\newcommand{\bea}{\begin{eqnarray}}
\newcommand{\eea}{\end{eqnarray}}
\newcommand{\tlambda}{{{\Lambda}}}
\newcommand{\tr}{\mathrm{Tr}}
\newcommand{\ee}{\end{equation}}
\newlength{\blank}
\newcommand{\nc}{\newcommand}
\nc{\pw}{\mt{PW}}
\nc{\arbclass}{\mt{\Omega}}
\nc{\rnc}{\renewcommand}
\nc{\catchset}{T}
\nc{\dg}{\dagger}
\nc{\dn}{\downarrow}
\begin{document}

\title{One-shot rates for entanglement manipulation under non-entangling 
maps}

\author{Fernando~G.S.L.~Brand\~ao and Nilanjana Datta 
  \thanks{Fernando Brand\~ao (fgslbrandao@googlemail.com) is at the Physics Department of Universidade Federal de Minas Gerais, Brazil. 
Nilanjana Datta (N.Datta@statslab.cam.ac.uk) is in the Statistical Laboratory, Dept. of Applied Mathematics and Theoretical Physics, University of Cambridge. This work is part of the QIP-IRC supported by the European Community's Seventh Framework Programme
(FP7/2007-2013) under grant agreement number 213681, EPSRC (GR/S82176/0) as well as the 
Integrated Project Qubit Applications (QAP) supported by the IST directorate 
as Contract Number 015848'.  FB is supported by a "Conhecimento Novo" fellowship from Funda\c{c}\~ao 
de Amparo a Pesquisa do Estado de Minas Gerais (FAPEMIG). } 
}
\maketitle

\begin{abstract}

We obtain expressions for the optimal rates of one-shot entanglement manipulation under operations which generate a negligible
amount of entanglement. As the optimal rates for entanglement distillation and dilution in this paradigm, we obtain the max- and min-relative entropies of entanglement, the two logarithmic robustnesses of entanglement, and smoothed versions thereof. This gives a new operational meaning 
to these entanglement measures. Moreover, by considering the limit of many identical copies of the shared entangled state, we partially recover the recently found reversibility of
entanglement manipulation under the class of operations which asymptotically do not generate entanglement. 

\end{abstract}

\section{Introduction}

In the distant laboratory paradigm of quantum information theory, a system shared by two or more parties might have correlations that cannot be described by classical shared randomness; we say a state is entangled if it contains such intrinsically quantum correlations and hence cannot be created by local operations and classical communication (LOCC). Quantum teleportation \cite{BBC+93} shows that entanglement can actually be seen as a resource under the constraint that only LOCC operations are accessible. Indeed, one can use entanglement and LOCC to implement any operation allowed by quantum theory \cite{BBC+93}.
The development of entanglement theory is thus centered in understanding, in a quantitative manner, the interconversion of one entangled state into another by LOCC, and their use for various information-theoretic tasks \cite{HHHH07, PV07}. 

In \cite{BBPS96}, Bennett \textit{et al} proved that entanglement manipulations of bipartite pure states, in the asymptotic limit of an
arbitrarily large number of copies of the state, are reversible. Given two 
bipartite pure states $\ket{\psi_{AB}}$ and $\ket{\phi_{AB}}$, 
the former can be converted into the latter by LOCC if, and only if, $E(\ket{\psi_{AB}})
\geq E(\ket{\phi_{AB}})$, where $E$ is the von Neumann entropy of 
either of the two reduced density matrices of the state. For mixed bipartite states, it turns out that the situation is rather more complex. For instance there are examples of mixed bipartite states, known as \textit{bound} entangled states \cite{HHH98}, which require a non-zero rate of pure state entanglement for their creation by LOCC in the limit of many copies, but from which no pure state entanglement can be extracted \cite{HHH98, VC01, YHHS05}. 

This inherent irreversibility in the asymptotic manipulation of entanglement led to the exploration of different scenarios for the study of entanglement, departing from the original one based on LOCC operations (see e.g. \cite{Rai97, Rai01, EVWW01, APE03, IP05, HOH02}). The main motivation in these studies was to develop a simplified theory of entanglement manipulation, with the hope that it would also lead to new insights into the physically motivated setting of LOCC manipulations. 

Recently one possible such scenario has been identified. In Refs. \cite{BP08a, BP, Bra} the manipulation of entanglement under any operation which generates a negligible amount of entanglement, in the limit of many copies, was put forward. Remarkably, it was found that one recovers for multipartite mixed states the reversibility encountered for bipartite pure states under LOCC. In such a setting, only one measure is meaningful: the regularized relative entropy of entanglement 
\cite{VP, VW01}; it completely specifies when a multipartite state can be converted into another by the accessible operations. This framework has also found interesting applications to the LOCC paradigm, such as a proof that the LOCC entanglement cost is strictly positive for every multipartite entangled state \cite{Bra, BP09a} (see \cite{Piani09} for a different proof), new insights into separability criteria \cite{BS09}, and impossibility results for reversible transformations of pure multipartite states \cite{Bra10}. 

In this paper we analyze entanglement conversion of general multipartite states under non-entangling and approximately non-entangling operations in the \textit{single copy} regime (see e.g. \cite{wolf2}, \cite{ligong}, \cite{buscemidatta}, \cite{mmnd}, \cite{wcr} for other studies of the single copy regime in classical and quantum information theory). We will identify the single copy cost and distillation functions under non-entangling maps with the two logarithmic robustnesses of entanglement \cite{VT, HN, Bra05} (one of them also referred to as the max-relative entropy of entanglement \cite{nd2}), and the min-relative entropy of entanglement \cite{nd2}, respectively. On one hand, our findings give operational interpretation to these entanglement measures. On the other hand, they give further insight into the reversibility attained in the asymptotic regime. Indeed, we will be able to prove reversibility, under catalytic entanglement manipulations, by taking the asymptotic limit in our finite copy formulae and using a certain generalization of quantum Stein's Lemma proved in Ref. \cite{BP09a} (which is also the main technical tool used in \cite{BP08a, BP, Bra}). We hence partially recover the results of \cite{BP08a, BP, Bra}, where reversibility was proved without the use of entanglement catalysis.  

The paper is organized as follows. In Section \ref{prelim} we introduce the 
necessary notation and definitions. Section \ref{main} contains our main
results, stated as Theorems \ref{smoothmin}-\ref{thmeq}. These theorems are then proved
in Sections \ref{min},\ref{lrob}, \ref{max} and \ref{eq}, respectively. 

\section{Notation and Definitions}
\label{prelim}
Let ${\cal B}(\sH)$ denote the algebra of linear operators acting on a
finite--dimensional Hilbert space $\sH$, and let ${\cal B}^{+}({\cal H})
\subset {\cal B}(\sH)$
denote the set of positive operators acting in $\sH$. Let  ${\cal D}(\sH)
\subset {\cal B}^{+}({\cal H})$
denote the set of states (positive operators of unit trace). 

Given a multipartite Hilbert space $\sH = \sH_1 \otimes ... \otimes \sH_m$, we say 
a state $\sigma \in {\cal D}(\sH_1 \otimes ... \otimes \sH_m)$ is separable if there 
are local states $\sigma_j^k \in {\cal D}(\sH_k)$ and a probability distribution 
$\{ p_j \}$ such that
\begin{equation}
\sigma = \sum_j p_j \sigma_j^1 \otimes ... \otimes  \sigma_j^m.
\end{equation}
We denote the set of separable states by ${\cal{S}}$. 

For given orthonormal bases $\{|i^A\rangle\}_{i=1}^d$ and
$\{|i^B\rangle\}_{i=1}^d$ in isomorphic Hilbert spaces ${\cal{H}}_A$
and ${\cal{H}}_B$ of dimension $d$, a maximally entangled state (MES)
of rank $M \le d$ is given by
$$ |\Psi_M^{AB}\rangle= \frac{1}{\sqrt{M}} \sum_{i=1}^M |i^A\rangle |i^B\rangle.$$
We define the fidelity of two quantum states $\rho$, $\sigma$ as
\begin{equation}
F(\rho, \sigma) = \left( \tr \sqrt{\sqrt{\sigma}\rho \sqrt{\sigma}} \right)^2.
\end{equation}
Finally, we denote the support of an operator $X$ by $\supp(X)$.
Throughout this paper we
restrict our considerations to finite-dimensional Hilbert spaces, and 
we take the logarithm to base $2$.

In \cite{nd1} two generalized relative entropy quantities, referred
to as the min- and max- relative entropies, were introduced. These are
defined as follows.
\begin{definition}
Let $\rho \in {\cal D}({\cal H})$ and $\sigma \in 
{\cal B}^{+}({\cal H})$ be such that $\supp(\rho) \subseteq \supp(\sigma)$. Their max-relative entropy is given by
  \be
    D_{\max}(\rho|| \sigma)
  :=
    \log \min \{ \lambda: \, \rho\leq \lambda \sigma \},
    \ee
while their min-relative entropy is given by
  \be
    D_{\min}(\rho|| \sigma)
  :=    - \log \tr\bigl(\Pi_{\rho}\sigma\bigr) \ ,
  \label{dmin}
\ee
where $\Pi_{\rho}$ denotes the projector onto $\supp(\rho)$ \footnote{Note that $D_{\min}(\rho|| \sigma)$ is well-defined whenever $\text{supp}(\rho) \cap \text{supp}(\sigma)$ is not empty.}.
\end{definition}
As noted in \cite{nd1, mmnd}, $D_{\min}(\rho|| \sigma)$ is the relative R\'enyi 
entropy of order $0$. 

In \cite{nd2} two entanglement measures were defined in terms of the above quantities. 
\begin{definition}
The max-relative entropy of entanglement of $\rho \in {\cal D}({\cal H})$ is given by 
\be E_{\max}(\rho):= \min_{\sigma \in {\cal{S}}} D_{\max} (\rho||\sigma),
\label{entmeasure}
\ee
while its min-relative entropy of entanglement is given by
\be 
E_{\min}(\rho):= \min_{\sigma \in {\cal{S}}} D_{\min} (\rho||\sigma),
\label{entmin}
\ee
\end{definition}

It turns out \cite{nd2} that $E_{\max}(\rho)$ is not really a new quantity, but is actually equal to the logarithmic version of the global robustness of 
entanglement, given by \cite{Bra05}
\be
LR_G(\rho) := \log (1 + R_G(\rho)),
\label{glr}
\ee
where $R_G(\rho)$ is the global robustness of entanglement \cite{VT, HN} 
defined as
\be
R_G(\rho) := \min_{s \in \mathbb{R}} \Bigl(s \ge 0: \exists \,\omega \in {\cal{D}} \,\,{\rm{s.t.}}\,\, 
\frac{1}{1+s}\rho + \frac{s}{1+s}\omega \in  {\cal{S}} \Bigr).
\label{gr}
\ee
 
Another quantity of relevance in this paper is the robustness of entanglement \cite{VT}, 
denoted by $R(\rho)$. Its definition is analogous to that of $R_G(\rho)$ 
except that the states $\omega$ in Eq.\reff{gr} are restricted to separable 
states. Its logarithmic version is defined as follows.
\begin{definition}
The logarithmic robustness of entanglement of $\rho \in {\cal D}({\cal H})$ is given by
\begin{equation}
LR(\rho) := \log(1 + R(\rho)). 
\end{equation}
\end{definition}

We also define smoothed versions of the quantities we consider as follows (see also \cite{BP09a, marco}).
\begin{definition}
For any $\eps >0$, the smooth max-relative entropy of 
entanglement of $\rho \in {\cal D}({\cal H})$
is given by
\be
{E}_{\max}^\eps (\rho) := \min_{\bar{\rho} \in B^{\eps}(\rho)} E_{\max}(\bar{\rho}),
\ee
where $B^{\eps}(\rho) := \{\bar \rho \in {\cal D}({\cal H}) : F(\bar \rho, \rho) \geq 1 - \eps  \}$. 

The smooth logarithmic robustness of entanglement of $\rho \in {\cal D}({\cal H})$ in turn is given by
\be\label{eps_LR}
LR^\eps (\rho) := \min_{\bar{\rho} \in B^{\eps}(\rho)} LR(\bar{\rho}).
\ee

Finally, the smooth min-relative entropy of entanglement of $\rho \in {\cal D}({\cal H})$ is defined as
\be
{E}_{\min}^\eps (\rho) := \max_{0 \leq A \leq \id \atop{\tr(A \rho) \geq 1 - \eps}} \min_{\sigma \in {\cal S}} \left(- \log \tr(A \sigma) \right).
\ee
\end{definition}

We note that the definition of ${E}_{\min}^\eps(\rho)$ which we use in this paper is different from the one introduced in \cite{nd2}, where the smoothing was performed over an $\eps$-ball around the state $\rho$, in analogy with the smooth version of ${E}_{\max}^\eps(\rho)$ given above. 
Note also that while this new smoothing is a priori inequivalent to the one
in \cite{nd2}, it is equivalent to the ``operator-smoothing'' introduced in \cite{buscemidatta}, which, in addition, gives rise to a continuous family of smoothed relative R\'enyi entropies.

We will consider regularized versions of the smooth min- and max-relative entropies
of entanglement
\bea
{\E}_{\min}^\eps (\rho)&:=& \liminf_{n\rightarrow \infty}\frac{1}{n} E_{\min}^\eps(\rho^{\otimes n}),
\nonumber\\
{\E}_{\max}^\eps (\rho)&:=& \limsup_{n\rightarrow \infty}\frac{1}{n} E_{\max}^\eps(\rho^{\otimes n}),\nonumber\\
\eea
and the quantities
\bea
{\E}_{\min} (\rho) &:=& \lim_{\eps \rightarrow 0} {\E}_{\min}^\eps (\rho)
\nonumber\\
{\E}_{\max} (\rho) &:=& \lim_{\eps \rightarrow 0} {\E}_{\max}^\eps (\rho)
\label{deff}
\eea
In \cite{BP09a, nd2} it was proved that ${\E}_{\max} (\rho)$ is equal to the
regularized relative entropy of entanglement \cite{VP, VW01}
\be
E_R^\infty(\rho):= \lim_{n\rightarrow \infty} \frac{1}{n}E_R(\rho^{\otimes n}),
\label{rel11}
\ee
where
\begin{equation}
E_{R}(\omega) := \min_{\sigma \in {\cal S}} S(\omega || \sigma),
\end{equation}
is the relative entropy of entanglement and $S(\omega || \sigma) := \tr(\rho(\log(\rho) - \log(\sigma)))$ the quantum relative entropy.

In this paper we prove that also ${\E}_{\min} (\rho)$ is equal to $E_R^\infty(\rho)$ (see Theorem 4).

We can now be more precise about the classes of maps we consider for the manipulation of entanglement, introduced in \cite{BP08a, BP}. 

\begin{definition}
A completely positive trace-preserving (CPTP) map $\Lambda$ is said to be a non-entangling (or separability preserving) map 
if $\Lambda(\sigma)$ is separable for any separable state $\sigma$. We denote the class of such maps by SEPP \footnote{The acronym comes from the name {\em{separability preserving.}}}.
\end{definition}

\begin{definition}
For any given $\delta >0$ we say a map $\Lambda$ is a $\delta$-non-entangling map if $R_G(\Lambda(\sigma)) \le \delta$ for every separable state $\sigma$. We denote the class of such maps by $\delta$-SEPP.
\end{definition}

In the following sections we will consider entanglement manipulations under non-entangling and $\delta$-non-entangling maps. We first give the definitions of achievable and optimal rates of entanglement manipulation protocols under a general class of maps, in order to make the subsequent discussion more transparent. In the definitions we will consider maps from a multipartite state to a maximally entangled state and vice-versa. It should be understood that the first two parties share the maximally entangled state, while the quantum state of the other parties is trivial (one-dimensional). 

\begin{definition} 
The one-shot entanglement cost of $\rho$ under the class of operations $\Theta$ is defined as
\begin{eqnarray}
& &E_{C,\Theta}^{(1), \eps}(\rho)\\
& & := \min_{M, \Lambda} \{ \log M  :F( \rho, \Lambda(\Psi_M))\geq 1 - \eps, \Lambda \in \Theta, M \in \mathbb{Z}^+ \}. \nonumber
\end{eqnarray}
\end{definition}

We also consider a \textit{catalytic} version of entanglement dilution under $\delta$-non-entangling maps. 

\begin{definition}
The one-shot catalytic entanglement cost of $\rho$ under a class of quantum 
operations ${\Theta}$ is defined as 
\begin{eqnarray}
\tilde E_{C,\Theta}^{(1), \eps}(\rho) := \min_{M, K, \Lambda} && \{ \log M  : \Lambda(\Psi_M \otimes \Psi_K) = \rho' \otimes \Psi_K, \nonumber \\ && F( \rho, \rho')\geq 1 - \eps, \Lambda \in \Theta, M, K \in \mathbb{Z}^+ \}. \nonumber
\end{eqnarray}
\end{definition}
 
Finally, the next definition formalizes the notion of single-shot entanglement distillation under general 
classes of maps.    
 
\begin{definition} 
The one-shot distillable entanglement of $\rho$ under a class of quantum 
operations ${\Theta}$ is defined as 
\begin{eqnarray}
& &E_{D,\Theta}^{(1), \eps}(\rho)\\
& & := \max_{M, \Lambda} \{ \log M  :F( \Lambda(\rho), \Psi_M)\geq 1 - \eps, \Lambda \in \Theta, M \in \mathbb{Z}^+ \}. \nonumber
\end{eqnarray}
\end{definition}
 
In the following we shall consider ${\Theta}$ to be either the class of SEPP maps or the
class of $\delta$-SEPP maps for a given $\delta >0$.

\section{Main Results} \label{main}

The main results of the paper are given by the following four theorems.
They provide operational interpretations of the smooth max- and min-relative
entropies of entanglement, and the logarithmic version of the robustness of entanglement,
in terms of optimal rates of one-shot entanglement manipulation
protocols.

The first theorem relates the smoothed min-relative entropy of entanglement to the single-shot distillable entanglement under 
non-entangling maps.
\begin{theorem}
\label{smoothmin}
For any state $\rho$ and any $\varepsilon \geq 0$,
\be
\lfloor E_{\min}^\varepsilon (\rho) \rfloor \le E_{D, SEPP}^{(1), \eps}(\rho)
\le  E_{\min}^{\varepsilon}(\rho).
\ee 
\end{theorem}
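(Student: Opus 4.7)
The plan is to prove the upper and lower bounds separately, each by an explicit construction converting between test operators for $E_{\min}^\eps$ and SEPP distillation protocols.

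\emph{Upper bound} $E_{D,SEPP}^{(1),\eps}(\rho)\le E_{\min}^\eps(\rho)$: Given any achievable rate $R$, there is $\Lambda\in SEPP$ with $F(\Psi_M,\Lambda(\rho))\ge 1-\eps$ and $\log M\ge R$. I would set $A:=\Lambda^{*}(\proj{\Psi_M})$, where $\Lambda^{*}$ is the Heisenberg-picture adjoint. Since $\Lambda$ is CPTP, $\Lambda^{*}$ is completely positive and unital, so $0\le A\le I$. The fidelity condition gives $\tr(A\rho)=\tr(\proj{\Psi_M}\Lambda(\rho))=F(\Psi_M,\Lambda(\rho))\ge 1-\eps$, so $A$ is a feasible operator in the definition of $E_{\min}^{\eps}(\rho)$. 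For every separable $\sigma$, $\Lambda(\sigma)$ is again separable by the SEPP property, and the standard bound $\langle\Psi_M|\tau|\Psi_M\rangle\le 1/M$ for any separable $\tau$ yields $\tr(A\sigma)=\tr(\proj{\Psi_M}\Lambda(\sigma))\le 1/M$. Hence $-\log\tr(A\sigma)\ge\log M\ge R$, and taking the supremum over $R$ gives the upper bound.

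\emph{Lower bound} $\lfloor E_{\min}^\eps(\rho)\rfloor\le E_{D,SEPP}^{(1),\eps}(\rho)$: Let $M=2^{\lfloor E_{\min}^\eps(\rho)\rfloor}$, so $\log M\le E_{\min}^\eps(\rho)$. Take an (almost) optimal $A$ in the definition of $E_{\min}^\eps(\rho)$ with $\tr(A\rho)\ge 1-\eps$ and $\tr(A\sigma)\le 2^{-E_{\min}^\eps(\rho)}\le 1/M$ for every separable $\sigma$. I would then define
\be
\Lambda(X):=\tr(AX)\,\proj{\Psi_M}+\tr((I-A)X)\,\omega_{0},
\ee
where $\omega_{0}:=(I-\proj{\Psi_M})/(M^{2}-1)$ is the separable isotropic state orthogonal to $\Psi_M$. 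Complete positivity and trace preservation are immediate. The fidelity check is also immediate: since $\langle\Psi_M|\omega_{0}|\Psi_M\rangle=0$, one has $F(\Psi_M,\Lambda(\rho))=\tr(A\rho)\ge 1-\eps$.

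The nontrivial step is verifying that $\Lambda$ is SEPP. For separable $\sigma$, write $p:=\tr(A\sigma)\in[0,1/M]$, so $\Lambda(\sigma)=p\proj{\Psi_M}+(1-p)\omega_{0}$. The point is that the isotropic state $\tfrac{1}{M}\proj{\Psi_M}+\tfrac{M-1}{M}\omega_{0}$ is separable (this is the separability threshold of isotropic states), so $\Lambda(\sigma)$ can be written as the convex combination $(1-pM)\omega_{0}+pM\bigl(\tfrac{1}{M}\proj{\Psi_M}+\tfrac{M-1}{M}\omega_{0}\bigr)$ of two separable states, hence is separable. This convexity argument, together with the choice of $\omega_{0}$ meeting both the orthogonality condition to $\Psi_M$ (for the fidelity) and the isotropic separability threshold (for the SEPP property), is the crux of the construction, and the main place where the bound $\tr(A\sigma)\le 1/M$ is used. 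With $\Lambda$ in hand, $\log M=\lfloor E_{\min}^\eps(\rho)\rfloor$ is an $\eps$-achievable rate, concluding the proof.
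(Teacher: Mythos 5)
Your proposal is correct and follows essentially the same route as the paper: the achievability direction uses the identical isotropic-twirl-type map $\Lambda(X)=\tr(AX)\Psi_M+\tr((I-A)X)(I-\Psi_M)/(M^2-1)$ with the separability threshold of isotropic states, and your converse, which pulls $\Psi_M$ back through the adjoint to get the feasible test operator $A=\Lambda^{*}(\Psi_M)$, is just the paper's monotonicity lemma for $E_{\min}^{\eps}$ under SEPP maps inlined and specialized to the test $\Psi_M$. No gaps.
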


The following theorem relates the smoothed logarithmic robustness of entanglement to the one-shot entanglement cost under 
non-entangling maps. 
\begin{theorem}
\label{logrob}
For any state $\rho$ and any $\varepsilon \geq 0$,
\be
LR^{\varepsilon}(\rho) \le E_{C,{\rm{SEPP}}}^{(1),\varepsilon }(\rho) \le LR^{\varepsilon}(\rho) + 1.
\label{47}
\ee
\end{theorem}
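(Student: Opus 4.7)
My plan is to split the result into the converse bound $LR^{\varepsilon}(\rho)\le E_{C,\mathrm{SEPP}}^{(1),\varepsilon}(\rho)$ and the achievability bound $E_{C,\mathrm{SEPP}}^{(1),\varepsilon}(\rho)\le\lceil LR^{\varepsilon}(\rho)\rceil$. Two elementary observations drive both halves. First, the logarithmic robustness $LR$ is monotone under SEPP maps: if $\bar\rho+s\omega=(1+s)\sigma$ with $\omega,\sigma\in\mathcal{S}$ witnesses $R(\bar\rho)\le s$, then for any $\Lambda\in\mathrm{SEPP}$ the operators $\Lambda(\omega)$ and $\Lambda(\sigma)$ are still separable, so $\Lambda(\bar\rho)+s\Lambda(\omega)=(1+s)\Lambda(\sigma)$ witnesses $R(\Lambda(\bar\rho))\le s$ and hence $LR(\Lambda(\bar\rho))\le LR(\bar\rho)$. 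Second, I would invoke the standard identity $R(\Psi_{M})=M-1$, equivalently $LR(\Psi_{M})=\log M$.

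For the converse I would fix any $\varepsilon$-achievable rate $R$, i.e.\ a map $\Lambda\in\mathrm{SEPP}$ and an integer $M$ with $\log M\le R$ and $F(\rho,\Lambda(\Psi_{M}))\ge 1-\varepsilon$. Setting $\bar\rho:=\Lambda(\Psi_{M})\in B^{\varepsilon}(\rho)$, the two observations chain to
$$LR^{\varepsilon}(\rho)\;\le\;LR(\bar\rho)\;\le\;LR(\Psi_{M})\;=\;\log M\;\le\;R,$$
and taking the infimum over admissible $R$ yields the converse.

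For the achievability I would pick $\bar\rho\in B^{\varepsilon}(\rho)$ attaining $LR^{\varepsilon}(\rho)$, set $s:=R(\bar\rho)$, and let $\omega,\sigma\in\mathcal{S}$ satisfy $\bar\rho+s\omega=(1+s)\sigma$. Taking the integer $M:=2^{\lceil LR^{\varepsilon}(\rho)\rceil}$ one has $M\ge 1+s$ and $\log M=\lceil LR^{\varepsilon}(\rho)\rceil$. I would then exhibit the measure-and-prepare map
$$\Lambda(\xi)\;:=\;\tr(\Psi_{M}\,\xi)\,\bar\rho\;+\;\tr\bigl((I-\Psi_{M})\,\xi\bigr)\,\omega,$$
which is visibly CPTP and satisfies $\Lambda(\Psi_{M})=\bar\rho$, so $F(\rho,\Lambda(\Psi_{M}))\ge 1-\varepsilon$ at rate $\log M=\lceil LR^{\varepsilon}(\rho)\rceil$. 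To verify $\Lambda\in\mathrm{SEPP}$, I would use the standard bound $p:=\tr(\Psi_{M}\,\xi)\le 1/M$ valid for every separable $\xi$; substituting $\bar\rho=(1+s)\sigma-s\omega$ then yields
$$\Lambda(\xi)\;=\;p(1+s)\,\sigma\;+\;\bigl(1-p(1+s)\bigr)\,\omega,$$
a convex combination of two separable states, since $p(1+s)\le(1+s)/M\le 1$.

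The principal difficulty lies in the achievability construction: a universal default output (e.g.\ the maximally mixed state) would not in general preserve separability of $\Lambda(\xi)$ for every separable $\xi$. What makes $\Lambda$ work is that the default state $\omega$ is chosen to be the specific separable state furnished by the robustness decomposition of $\bar\rho$, so that the maximal overlap $1/M$ of any separable state with $\Psi_{M}$ dovetails exactly with the robustness budget $1/(1+s)$. The converse, by contrast, is essentially immediate once one has the monotonicity of $LR$ under SEPP and the identity $LR(\Psi_{M})=\log M$.
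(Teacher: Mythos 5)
Your proposal is correct and follows essentially the same route as the paper: the converse via monotonicity of $LR$ under SEPP maps together with $LR(\Psi_M)=\log M$, and achievability via the same measure-and-prepare map whose default output is the separable state from the robustness decomposition of the optimal $\bar\rho\in B^{\varepsilon}(\rho)$. Your verification of separability by direct substitution is equivalent to the paper's convexity argument, and your choice $M=2^{\lceil LR^{\varepsilon}(\rho)\rceil}$ actually handles the integrality of $M$ more carefully than the paper's $M=1+R(\rho_{\varepsilon})$.
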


We also prove an analogous theorem to the previous one, but now relating the logarithmic global robustness (alias max-relative 
entropy of entanglement) to the one-shot catalytic entanglement cost under $\delta$-non-entangling maps. 
\begin{theorem}
\label{emax}
For any $\delta, \eps > 0$ there exists a positive integer $K$, such that for any state $\rho$
\bea
E_{\max}^{\eps}(\rho \otimes \Psi_K) &-& \log K  - \log(1+\delta)
\le {\widetilde{E}}_{C, \delta-SEPP}^{(1), \eps}(\rho) \nonumber\\
&\le& E_{\max}^{\eps}(\rho \otimes \Psi_K) - \log(1 - \eps) - \log K + 1.\nonumber \\
\eea
We can take in particular $K = \lceil 1 + \delta^{-1} \rceil$.
\end{theorem}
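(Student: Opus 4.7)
The plan is to prove the two inequalities separately; the lower bound is a short operator-inequality computation, while the upper bound requires an explicit construction of a $\delta$-non-entangling map and is the more substantial direction.

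For the lower bound, I would start from the hypothesis that $R = \log M$ is an $\eps$-achievable catalytic dilution rate, realised by some $\delta$-SEPP map $\Lambda$ with $\Lambda(\Psi_M \otimes \Psi_K) = \tilde\rho \otimes \Psi_K$ and $F(\tilde\rho,\rho) \ge 1-\eps$. The key ingredient is that $\Psi_M \otimes \Psi_K$, being a maximally entangled state of rank $MK$, satisfies $\Psi_M \otimes \Psi_K \le MK\cdot\sigma_0$ for some separable state $\sigma_0$ (for instance $(1/MK)\sum_i |ii\rangle\langle ii|$ in the Schmidt basis of the MES). Applying $\Lambda$ and invoking the $\delta$-SEPP condition in its operator-inequality form $\Lambda(\sigma_0) \le (1+\delta)\sigma'$ (a direct consequence of $R_G(\Lambda(\sigma_0)) \le \delta$), one obtains
\[
\tilde\rho \otimes \Psi_K \;=\; \Lambda(\Psi_M \otimes \Psi_K) \;\le\; MK(1+\delta)\,\sigma',
\]
hence $E_{\max}(\tilde\rho \otimes \Psi_K) \le \log M + \log K + \log(1+\delta)$. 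Since $F(\tilde\rho \otimes \Psi_K,\,\rho \otimes \Psi_K) = F(\tilde\rho,\rho) \ge 1-\eps$, the state $\tilde\rho \otimes \Psi_K$ lies in $B^\eps(\rho \otimes \Psi_K)$, so $E_{\max}^\eps(\rho \otimes \Psi_K) \le R + \log K + \log(1+\delta)$, which rearranges to the claimed lower bound.

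For the upper bound, I would construct a $\delta$-SEPP map directly. Let $\bar\rho^\ast \in B^\eps(\rho \otimes \Psi_K)$ attain $E_{\max}^\eps(\rho \otimes \Psi_K)$, and let $\sigma,\omega$ be the pair realising its optimal global-robustness decomposition $\bar\rho^\ast + R_G\,\omega = (1+R_G)\,\sigma$, where $R_G := R_G(\bar\rho^\ast)$ and $\sigma$ is separable. Take $MK$ to be the least integer exceeding $1+R_G$, so that $\log M \le E_{\max}^\eps(\rho \otimes \Psi_K) - \log K$, and consider the measure-and-replace map
\[
\Lambda(X) \;:=\; \tr\!\bigl((\Psi_M \otimes \Psi_K)\,X\bigr)\,\bar\rho^\ast \;+\; \tr\!\bigl((I - \Psi_M \otimes \Psi_K)\,X\bigr)\,\omega,
\]
which satisfies $\Lambda(\Psi_M \otimes \Psi_K) = \bar\rho^\ast$. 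For any separable $\tau$, the standard MES bound $\tr((\Psi_M\otimes\Psi_K)\tau) \le 1/(MK)$ together with the identity $\bar\rho^\ast = (1+R_G)\sigma - R_G\,\omega$ rewrites $\Lambda(\tau) = p\,\sigma + (1-p)\,\omega$ with $p = (1+R_G)\,\tr((\Psi_M\otimes\Psi_K)\tau) \in [0,1]$, so the output is a convex mixture of a separable state and the robustness witness.

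The main obstacle is promoting this prototype into a bona fide $\delta$-SEPP map whose output is \emph{exactly} of the product form $\tilde\rho \otimes \Psi_K$ required by the definition of catalytic dilution. Two technical points must be resolved. First, $\bar\rho^\ast$ is only close to $\rho \otimes \Psi_K$ and need not factorise across the catalyst split; I would address this either by composing $\Lambda$ with a $U\otimes\overline{U}$-twirl on the catalyst registers (which fixes the isotropic/MES component) or by arguing that the optimiser may without loss of generality be taken in product form, using the product structure of the target $\rho \otimes \Psi_K$ and monotonicity of fidelity under partial trace. Second, and more substantively, to force $R_G(\Lambda(\tau)) \le \delta$ uniformly on separable $\tau$ one has to control $R_G(\omega)$, which is not automatic: on inputs $\tau$ orthogonal to $\Psi_M\otimes\Psi_K$ one has $p=0$ and $\Lambda(\tau)=\omega$. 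This is exactly where the existential clause ``there exists $K$'' in the theorem is used: by taking the catalyst size $K$ sufficiently large one can replace $\omega$ in the construction by a suitably dressed version whose global robustness is driven below $\delta$, at a cost that is absorbed into $\log K$ without spoiling the rate bound $\log M \le E_{\max}^\eps(\rho \otimes \Psi_K) - \log K$. Balancing these two modifications against that rate inequality is the technical heart of the argument.
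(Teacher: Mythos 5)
Your lower bound is correct and essentially coincides with the paper's converse: the paper routes it through a monotonicity lemma, $E_{\max}^{\eps}(\rho)\ge E_{\max}^{\eps}(\Lambda(\rho))-\log(1+\delta)$ for $\Lambda\in\delta$-SEPP (itself proved by exactly the operator inequality $\Lambda(\sigma_0)\le(1+\delta)\sigma'$ that you use), and then applies it with $E_{\max}(\Psi_M\otimes\Psi_K)=\log (MK)$. Your inline version of this is fine.

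The upper bound, however, contains a genuine gap, and you have located it yourself: the sentence ``one can replace $\omega$ by a suitably dressed version whose global robustness is driven below $\delta$'' is not an argument, and that step \emph{is} the proof. The dressing you gesture at is also not the mechanism that works. The paper's point is that no modification of $\omega$ is needed: if $\pi$ is a mixing state witnessing $R_G(\bar\rho)=s$, i.e.\ $\bar\rho+s\pi=(1+s)\sigma$ with $\sigma$ separable, then $\pi\le\frac{1+s}{s}\,\sigma$, hence $E_{\max}(\pi)\le\log(1+1/s)$ and $R_G(\pi)\le 1/s$. Because the target state carries the catalyst factor, $s=R_G(\rho_\eps\otimes\Psi_K)\ge R_G(\Psi_K)=K-1$, so the choice $K=\lceil 1+\delta^{-1}\rceil$ gives $R_G(\pi)\le 1/(K-1)\le\delta$ \emph{automatically}; this is the entire role of the catalyst and the reason the theorem is existential in $K$. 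Your first technical worry (that the optimizer $\bar\rho^{\ast}$ need not factorize across the catalyst cut) is likewise dissolved rather than solved in the paper: the map is built to output $\rho_\eps\otimes\Psi_K$ with $\rho_\eps$ a state $\eps$-close to $\rho$ itself, so the product form holds by construction and no twirl is required. As written, your proposal reproduces the right construction but stops exactly where the substance of the argument begins.
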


Finally we show that we can partially recover the reversibility of entanglement manipulations under asymptotically non-entangling maps \cite{BP08a, Bra05} from the results derived in this paper and the quantum hypothesis testing result of \cite{BP09a}. 
\begin{theorem}
\label{thmeq}
For every state $\rho \in {\cal D}({\cal H})$, 
\begin{equation}
{\E}_{\min} (\rho) = {\E}_{\max} (\rho) = E_R^{\infty}(\rho). 
\end{equation}
\end{theorem}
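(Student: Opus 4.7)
Since the equality $\E_{\max}(\rho)=E_R^\infty(\rho)$ was already established in~\cite{BP09a,nd2} (as remarked immediately after~\reff{deff}), only the identity $\E_{\min}(\rho)=E_R^\infty(\rho)$ requires proof. The plan is to recast $E_{\min}^\eps(\rho^{\otimes n})$ as a composite quantum hypothesis-testing rate distinguishing $\rho^{\otimes n}$ from the set of separable states on the $n$-fold tensor product, and then to invoke the extension of quantum Stein's lemma proved in~\cite{BP09a}, which is the single nontrivial input.

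The first step is a reformulation of $E_{\min}^\eps$. Because $-\log$ is strictly decreasing, the definition rearranges to
\begin{equation}
E_{\min}^\eps(\rho) \;=\; -\log\!\Bigl(\min_{A}\,\max_{\sigma\in\mathcal{S}}\,\tr(A\sigma)\Bigr),
\end{equation}
where the minimum runs over $0\le A\le I$ with $\tr(A\rho)\ge 1-\eps$; this is the negative logarithm of the optimal worst-case type-II error of an $\eps$-level test that must distinguish $\rho$ from every separable alternative simultaneously. Since $\tr(A\sigma)$ is bilinear on a product of convex compact sets, von Neumann's minimax theorem lets me swap the min and the max, which (after a second application of $-\log$) yields
\begin{equation}
E_{\min}^\eps(\rho) \;=\; \min_{\sigma\in\mathcal{S}}\,D_H^\eps(\rho\,\|\,\sigma),
\end{equation}
with $D_H^\eps(\rho\,\|\,\sigma):=\max_A\bigl(-\log\tr(A\sigma)\bigr)$ the operator-smoothed hypothesis-testing relative entropy, equivalent to the one considered in~\cite{buscemidatta}.

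The second step applies the extension of Stein's lemma. The generalized quantum Stein's lemma of~\cite{BP09a} asserts that for every $\eps\in(0,1)$ and every state $\rho$,
\begin{equation}
\lim_{n\to\infty}\frac{1}{n}\min_{\sigma_n\in\mathcal{S}_n}D_H^\eps\bigl(\rho^{\otimes n}\,\|\,\sigma_n\bigr) \;=\; E_R^\infty(\rho),
\end{equation}
where $\mathcal{S}_n$ is the set of separable states on the $n$-fold tensor product of the underlying multipartite Hilbert space. Combining this with the first step gives $\E_{\min}^\eps(\rho)=E_R^\infty(\rho)$ for every $\eps\in(0,1)$; since the right-hand side is independent of $\eps$, sending $\eps\to 0$ yields $\E_{\min}(\rho)=E_R^\infty(\rho)$, which is the remaining content of the theorem.

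The main obstacle is not in the outline above, which is routine once the minimax reformulation in the first step is in place; rather, all of the technical content is packed into the extension of Stein's lemma of~\cite{BP09a}, and this is precisely the reason the same asymptotic quantity $E_R^\infty$ appears on both the distillation and the cost sides. The only side-check one must perform is that the family $\{\mathcal{S}_n\}$ of separable-state sets satisfies the axioms (closedness, closure under tensor products and local operations, invariance under permutations, etc.)\ required by that theorem, which is standard for separable states.
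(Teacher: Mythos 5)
Your proposal is correct, but it reaches the conclusion by a genuinely different route than the paper. The paper stays in the information-spectrum picture: Lemma \ref{lemma4} identifies $\lim_{\eps\to 0}\liminf_{n}\frac{1}{n}D_{\min}^{\eps}(\rho_n\|\sigma_n)$ with the spectral inf-divergence rate $\underline{D}(\hat\rho\|\hat\sigma)$ for a \emph{fixed} sequence of separable states, and then uses only Proposition II.1 of \cite{BP09a} (Lemma \ref{BP09}) to bound $\underline{D}(\{\rho^{\otimes n}\}\|\{\sigma_n\})\ge E_R^{\infty}(\rho)$ for every such sequence; the matching upper bound is not obtained from a strong converse at all but is imported for free from ${\cal E}_{\min}(\rho)\le {\cal E}_{\max}(\rho)=E_R^{\infty}(\rho)$. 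You instead make the minimax step explicit --- rewriting $E_{\min}^{\eps}(\rho)$ as $\min_{\sigma\in{\cal S}}\max_{A}\left(-\log\tr(A\sigma)\right)$ via Sion's theorem --- and then invoke the generalized quantum Stein's lemma of \cite{BP09a} in its full operational form (direct part plus strong converse), which delivers both inequalities for each fixed $\eps\in(0,1)$ at once. Each approach buys something. Yours is shorter, and the explicit minimax swap cleanly handles a quantifier-order subtlety that the paper leaves implicit: the per-sequence bound coming from Lemmas \ref{lemma4} and \ref{BP09} controls a quantity of the form $\min_{\{\sigma_n\}}\max_{A}$, whereas $E_{\min}^{\eps}$ is the max-min quantity, so either a duality argument or the universal test supplied by the direct part of \cite{BP09a} is needed to pass between them. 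The paper's route, in turn, needs only the weaker Proposition II.1 rather than the full Stein's lemma with its strong converse, and the intermediate characterization in Lemma \ref{lemma4} is of independent interest. The one loose end in your write-up is that you quote \cite{BP09a} directly in the form $\lim_{n}\frac{1}{n}\min_{\sigma_n}D_H^{\eps}(\rho^{\otimes n}\|\sigma_n)=E_R^{\infty}(\rho)$; assembling this from the achievability and strong-converse statements actually proved there (together with the verification, which you correctly flag as routine, that the sets ${\cal S}_n$ of separable states satisfy the required axioms) should be spelled out, but it is straightforward.
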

From Theorems \ref{smoothmin} and  \ref{emax} we then find that the distillable entanglement and the \textit{catalytic} entanglement cost under asymptotically non-entangling maps are the same. In Refs. \cite{BP08a, Bra05} one could show the same result without the need of catalysis. Here we need the extra resource of catalytic maximally entangled states because we want to ensure that already on a single-copy level, our operations only generate a negligible amont of entanglement; in Refs. \cite{BP08a, Bra05}, in turn, this is only the case for a large number of copies of the state. 

In more detail: we define the distillable entanglement under non-entangling operations as 
\begin{equation}
E_D^{ne}(\rho) := \lim_{\eps \rightarrow 0} \lim_{n \rightarrow \infty} \frac{1}{n} E_{D, SEPP}^{(1), \eps}(\rho^{\otimes n}).
\end{equation}
It then follows easily from Theorem \ref{smoothmin} and Theorem \ref{thmeq} that $E_D^{ne}(\rho) = E_R^{\infty}(\rho)$. 

The catalytic entanglement cost under asymptotic non-entangling operations, in turn, is defined as 
\begin{equation}
E_C^{ane}(\rho) := \lim_{\eps \rightarrow 0} \lim_{\delta \rightarrow 0} \lim_{n \rightarrow \infty} \frac{1}{n} {\widetilde{E}}_{C, \delta-SEPP}^{(1), \eps}(\rho).
\end{equation}
That $E_C^{ane}(\rho) = E_R^{\infty}(\rho)$ then follows from Theorems \ref{emax} and \ref{thmeq}.

We note that it was already proven in Refs.\cite{nd2, BP09a} that ${\E}_{\max} (\rho) = E_R^{\infty}(\rho)$. Our contribution is to show that also the regularization of the smooth min-relative entropy of entanglement is equal to the regularized relative entropy of entanglement.

\section{Proof of Theorem~\ref{smoothmin}}
\label{min}
The proof of Theorem~\ref{smoothmin} will employ the following lemma.
\begin{lemma} \label{monotonicityEmin}
For any $\Lambda \in {\rm{SEPP}}$, 
\begin{equation}
E_{\min}^{\eps}(\rho) \geq E_{\min}^{\eps}(\Lambda(\rho))
\end{equation}
\end{lemma}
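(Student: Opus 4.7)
The plan is to prove this by the standard data-processing trick, passing the test operator $A$ through the Heisenberg-picture adjoint $\Lambda^*$ of $\Lambda$ and using the SEPP property to control the minimum over separable states.

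First, I would pick an optimizer $A^*$ in the variational formula for $E_{\min}^\eps(\Lambda(\rho))$, i.e.\ an operator with $0 \leq A^* \leq \id$ and $\tr(A^* \Lambda(\rho)) \geq 1 - \eps$ such that $E_{\min}^\eps(\Lambda(\rho)) = \min_{\sigma \in {\cal S}} -\log \tr(A^* \sigma)$ (existence by compactness). Then set $A' := \Lambda^*(A^*)$, where $\Lambda^*$ denotes the adjoint map. Since $\Lambda$ is CPTP, $\Lambda^*$ is completely positive and unital, so $0 \leq A' \leq \id$. Moreover, by the defining property of the adjoint,
\be
\tr(A'\rho) = \tr(\Lambda^*(A^*)\rho) = \tr(A^* \Lambda(\rho)) \geq 1 - \eps,
\ee
so $A'$ is feasible in the variational formula for $E_{\min}^\eps(\rho)$.

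Second, I would exploit that $\Lambda \in {\rm SEPP}$. For every separable $\sigma \in {\cal S}$, $\Lambda(\sigma) \in {\cal S}$ as well, so $\{\Lambda(\sigma) : \sigma \in {\cal S}\} \subseteq {\cal S}$. Hence
\be
\max_{\sigma \in {\cal S}} \tr(A^* \Lambda(\sigma)) \leq \max_{\tau \in {\cal S}} \tr(A^* \tau),
\ee
which, after taking $-\log$, yields
\be
\min_{\sigma \in {\cal S}} \bigl(-\log \tr(A' \sigma)\bigr) \geq \min_{\tau \in {\cal S}} \bigl(-\log \tr(A^* \tau)\bigr) = E_{\min}^\eps(\Lambda(\rho)).
\ee

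Combining these two steps, $A'$ is a feasible choice in the maximization defining $E_{\min}^\eps(\rho)$ that already witnesses the value $E_{\min}^\eps(\Lambda(\rho))$, so
\be
E_{\min}^\eps(\rho) \geq \min_{\sigma \in {\cal S}} \bigl(-\log \tr(A' \sigma)\bigr) \geq E_{\min}^\eps(\Lambda(\rho)),
\ee
which is the claimed inequality. There is no serious obstacle here; the proof is essentially a data-processing argument. The only point requiring any care is making sure the definition of $E_{\min}^\eps$ used is the one in terms of test operators $0\le A\le \id$ rather than an $\eps$-ball smoothing, since it is precisely the operator-smoothed version that pushes through $\Lambda^*$ cleanly (this, in fact, is part of the motivation the authors gave for adopting this smoothing over the one in \cite{nd2}).
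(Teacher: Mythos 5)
Your proof is correct and is essentially the paper's own argument: both pull the optimal test operator for $E_{\min}^{\eps}(\Lambda(\rho))$ back through the adjoint map, verify feasibility via unitality and $\tr(\Lambda^{*}(A^{*})\rho)=\tr(A^{*}\Lambda(\rho))\geq 1-\eps$, and use the separability-preserving property to compare the minima over $\cal S$. The only difference is presentational — the paper phrases the middle step via the minimizing separable state $\sigma_\rho$ rather than via the inclusion $\{\Lambda(\sigma):\sigma\in{\cal S}\}\subseteq{\cal S}$, which is the same observation.
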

\begin{proof}
Let $0 \leq A \leq \id$ be such that $\tr(A \Lambda(\rho)) \geq 1 - \eps$ and $E_{\min}^{\eps}(\Lambda(\rho)) = \min_{\sigma \in {\cal S}} (- \log \tr(A \sigma))$. Setting $\sigma_{\rho}$ as the optimal state in the definition of $E_{\min}^{\eps}(\rho)$,
\bea
E_{\min}^{\eps}(\rho) &\geq& - \log \tr(\Lambda^{\cal y}(A) \sigma_{\rho}) \nonumber \\
&=& - \log \tr(A \Lambda(\sigma_{\rho})) \nonumber \\
&\geq& \min_{\sigma \in {\cal S}} \left( - \log \tr(A \sigma) \right) \nonumber \\
&=& E_{\min}^{\eps}(\Lambda(\rho)).
\eea
where $\Lambda^{\cal y}$ is the adjoint map of $\Lambda$. In the first line we used that $0 \leq \Lambda^{\cal y}(A) \leq \id$ and $\tr(\Lambda^{\cal y}(A)\rho) = \tr(A \Lambda(\rho))\geq 1 - \eps$, while in the third line we use the fact that $\Lambda(\sigma_\rho)$ is separable, since 
$\Lambda \in {\rm{SEPP}}$.
\end{proof}

\begin{proof}[Theorem~\ref{smoothmin}]
We first prove that $E_{D,{\rm{SEPP}}}^{(1), \eps}\ge \lfloor E_{\min}^{\eps}(\rho) \rfloor.$ For this it suffices to prove that 
any $R \le  \lfloor E_{\min}^{\eps}(\rho) \rfloor$ is an achievable
one-shot distillation rate for $\rho$.

Consider the class of completely positive trace-preserving 
maps $\Lambda \equiv \Lambda_A$ (for an operator $0\le A \le I$) 
whose action on a state $\rho$ is given as follows:
\be
\Lambda(\rho) := \tr(A\rho) \Psi_M + \tr\bigl((I-A)\rho\bigr) \frac{(I - \Psi_M)}{M^2 - 1},
\label{vintecinco}
\ee
for any state $\rho\in {\cal{D}}({\cal{H}})$. An isotropic state $\omega$, as the one appearing on the right-hand side of Eq. (\ref{vintecinco}), is separable if and only if 
$\tr(\omega \Psi_M) \le 1/M$ \cite{horo1}. Hence, the map $\Lambda$ is SEPP if,
and only if, for
any separable state $\sigma$,
$\tr(\Lambda(\sigma) \Psi_M) \le 1/M$, or equivalently,
\be
\tr(A \sigma) \le \frac{1}{M}.
\label{e11}
\ee

We now choose $A$ as the optimal POVM element in the definition of $E_{\min}^{\eps}(\rho)$ and set $M = 2^{\lfloor  E_{\min}^{\eps}(\rho) \rfloor}$. 

On one hand, as $\tr(A \rho) \geq 1 - \eps$, we find that $F(\Lambda(\rho), \Psi_M) \geq 1 - \eps$. On the other hand, by the definition of $E_{\min}^{\eps}(\rho)$, we have that 
\be
2^{- E_{\min}^{\eps}(\rho)} = \max_{\sigma \in {\cal S}} \tr(A \sigma)
\ee
and hence $\tr(A \sigma) \leq 1/M$ for every separable state $\sigma$, which 
ensures that the map $\Lambda$ defined by \reff{vintecinco} is a SEPP map. Hence,
$\log M = \lfloor E_{\min}^\eps (\rho)\rfloor$ is an achievable rate and
$E_{D,{\rm{SEPP}}}^{(1), \eps}\ge \lfloor E_{\min}^{\eps}(\rho) \rfloor.$

We next prove the converse, namely that $E_{D,{\rm{SEPP}}}^{(1), \eps}(\rho) \le  E_{\min}^{\eps}(\rho).$
Suppose $\Lambda$ is the optimal SEPP map such that $F(\Lambda(\rho), \Psi_M) \geq 1 - \eps$, with $\log M = E_{D,\eps}^{(1)}(\rho)$.

By Lemma \ref{monotonicityEmin} we have
\bea
E_{\min}^{\eps}(\rho) &\geq& E_{\min}^{\eps}(\Lambda(\rho)) \nonumber \\ 
&=& \max_{0 \leq A \leq \id \atop{\tr(A \Lambda(\rho)) \geq 1 - \eps}} \min_{\sigma \in {\cal S}} \left(- \log \tr(A \sigma) \right) \nonumber \\
&\geq& \min_{\sigma \in {\cal S}} \left(- \log \tr(\Psi_M \sigma) \right) \nonumber \\
&=& \log M \nonumber \\
&=& E_{D,\eps}^{(1)}(\rho),
\eea
where we used that $0 \leq \Psi_M \leq \id$ and $\tr(\Lambda(\rho)\Psi_M) \geq 1 - \eps$ and that $\tr(\Psi_M \sigma) \leq 1/M$ for every separable state $\sigma$.
\end{proof}

\section{Proof of Theorem~\ref{logrob}}
\label{lrob}
\begin{proof}
To prove the upper bound in \reff{47}, consider the 
quantum operation $\Lambda$ acting on a state $\omega$ as follows:
\be 
\Lambda(\omega) = \tr(\Psi_M \omega)\rho_{\varepsilon} + \bigl[1 -  \tr(\Psi_M \omega)\bigr] 
\pi,
\label{57}
\ee
where {{$\rho_\eps$ is the state in $B^\eps(\rho)$ which achieves the 
minimum in the definition \reff{eps_LR}of the smooth logarithmic robustness}},
and $\pi$ is a separable state such that the state
$$\sigma:= \bigl( \rho_{\varepsilon} + (M-1) \pi\bigr)/M,$$
is separable for the choice $M = 1 + \lceil R(\rho_{\varepsilon}) \rceil$.

We can rewrite Eq. \reff{57} as
\be 
\Lambda(\omega) = q \bigl[\frac{\rho_{\varepsilon} + (M-1) \pi}{M}\bigr]
+ (1-q) \pi,
\ee
where $q = M\tr(\Psi_M \omega)$. For a separable state 
$\omega$, $\tr(\Psi_M \omega) \le 1/M$ \cite{horo_sep}, and
hence $0\le q \le 1$.
By the convexity of the robustness \cite{rob_convex} we have that, for any
separable state $\omega$,
$$ R(\Lambda(\omega)) \le q R(\sigma) + (1-q) R(\pi).$$
Note that $R(\pi)=0$
since $\pi$ is separable. Moreover, since $R(\sigma)=0$ for $M = 1 + \lceil R(\rho_{\varepsilon}) \rceil$, we have
$R(\Lambda(\omega))=0$, ensuring that
the map $\Lambda$ is non-entangling. 
 
Note that $\Lambda(\Psi_M) = \rho_{\varepsilon}$, with the corresponding rate of 
$\log M = \log (1 +\lceil R(\rho_{\varepsilon}) \rceil) \leq LR^{\varepsilon}(\rho) + 1$.
This then yields the upper bound in Theorem \ref{logrob}.

To prove the lower bound in \reff{47}, let $\Lambda $ denote a SEPP
map yielding entanglement dilution with a fidelity of at least $1-\varepsilon$, for a state $\rho$, i.e.
$\Lambda_M(\Psi_M) = \rho_{\varepsilon}$, with $F(\rho, \rho_{\varepsilon}) \geq 1 - \varepsilon$, and $\log M = E_{C, {\rm{SEPP}}}^{(1), \varepsilon}$.
The monotonicity of log robustness under SEPP maps \cite{BP} yields
\bea
LR^{\varepsilon}(\rho) \leq LR(\rho_{\varepsilon}) &=& LR(\Lambda(\Psi_M))\nonumber\\
 & \le& LR(\Psi_M) \nonumber\\
&=& \log M =E_{C, {\rm{SEPP}}}^{(1), \varepsilon}.
\nonumber\\
\eea
\end{proof}

\section{Proof of Theorem~\ref{emax}}
\label{max}
The following lemmata will be employed in the proof of Theorem~\ref{emax}
\begin{lemma} \label{monoemax}
For any $\delta >0$ and $\Lambda \in \delta$-{\rm{SEPP}}, 
\begin{equation}
E_{\max}^{\eps}(\rho) \geq E_{\max}^{\eps}(\Lambda(\rho)) - \log(1+ \delta)
\end{equation}
\end{lemma}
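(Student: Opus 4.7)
The plan is to combine two facts: the variational definition of $E_{\max}$ through $D_{\max}$, and the structural consequence of $\Lambda$ being $\delta$-SEPP, namely that every separable state is mapped close to a separable state in a precise operator-inequality sense.

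First I would unpack the $\delta$-SEPP condition. By the definition of $R_G$, the inequality $R_G(\Lambda(\sigma)) \le \delta$ for a separable $\sigma$ means there exists $\omega \in {\cal D}$ and $\sigma' \in {\cal S}$ with $\Lambda(\sigma) = (1+\delta)\sigma' - \delta\omega$, so in particular $\Lambda(\sigma) \le (1+\delta) \sigma'$ as operators. This is the key ``approximate separability-preservation at the operator level'' that will cost exactly the $\log(1+\delta)$ term.

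Next I would pick near-optimal witnesses for the left-hand side. Let $\bar\rho \in B^\eps(\rho)$ achieve the minimum in $E_{\max}^\eps(\rho)$, and let $\sigma \in {\cal S}$ achieve the minimum in $E_{\max}(\bar\rho) = D_{\max}(\bar\rho \| \sigma)$. Set $\lambda := 2^{E_{\max}^\eps(\rho)}$, so that $\bar\rho \le \lambda \sigma$. Applying the positive map $\Lambda$ preserves operator inequalities, and combining with the structural fact above yields
\begin{equation}
\Lambda(\bar\rho) \le \lambda \,\Lambda(\sigma) \le \lambda (1+\delta)\, \sigma',
\end{equation}
for some $\sigma' \in {\cal S}$. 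By definition of $D_{\max}$ and $E_{\max}$ this gives $E_{\max}(\Lambda(\bar\rho)) \le E_{\max}^\eps(\rho) + \log(1+\delta)$.

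To finish, I need $\Lambda(\bar\rho)$ to be an admissible smoothing of $\Lambda(\rho)$, i.e.\ $\Lambda(\bar\rho) \in B^\eps(\Lambda(\rho))$. This is immediate from the monotonicity of fidelity under CPTP maps: $F(\Lambda(\bar\rho), \Lambda(\rho)) \ge F(\bar\rho, \rho) \ge 1 - \eps$. Hence $E_{\max}^\eps(\Lambda(\rho)) \le E_{\max}(\Lambda(\bar\rho))$, and rearranging the chain of inequalities produces the claim. I do not anticipate a serious obstacle; the only subtle point is ensuring the $\delta$-SEPP condition is used in the operator-inequality form above rather than as a mere bound on $E_{\max}(\Lambda(\sigma))$, since the former is what propagates cleanly through the $D_{\max}$ definition.
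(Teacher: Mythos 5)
Your proof is correct and follows essentially the same route as the paper's: choose the optimal $\bar\rho\in B^\eps(\rho)$, push it through $\Lambda$ using monotonicity of fidelity to get an admissible smoothing of $\Lambda(\rho)$, and then invoke the fact that $E_{\max}$ increases by at most $\log(1+\delta)$ under a $\delta$-SEPP map. The only difference is that where the paper imports that last step by citing Lemma IV.1 of \cite{BP}, you prove it inline via the operator inequality $\Lambda(\sigma)\le(1+\delta)\sigma'$ for $\sigma'\in{\cal S}$, which correctly propagates through the $D_{\max}$ definition and makes the argument self-contained.
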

\begin{proof}

Let $\rho_\eps$ be the optimal state in the definition of $E_{\max}^{\eps}(\rho)$, i.e., $E_{\max}^{\eps}(\rho) = E_{\max}(\rho_{\eps})$. By the 
monotonicity of the fidelity under CPTP maps we have that
$F(\Lambda(\rho), \Lambda(\rho_\eps))\ge F(\rho, \rho_\eps) \ge 1- \eps.$
Hence, using Lemma IV.1 of \cite{BP}
\bea
E_{\max}^{\eps}(\Lambda(\rho)) &\le & E_{\max}(\Lambda(\rho_\eps))\nonumber\\
&\le & E_{\max}(\rho_\eps) + \log (1 + \delta)\nonumber\\
&=& E_{\max}^{\eps}(\rho) + \log (1 + \delta).
\eea
\end{proof}

\begin{lemma} \label{newlemmads}
For every $\rho \in {\cal D}({\cal H})$ and $\varepsilon > 0$, there is a state $\mu_{\varepsilon}$ of the form 
\begin{equation}\label{mu}
\mu_{\varepsilon} := (1 - \lambda) \rho_{\varepsilon} \otimes \Psi_K + \lambda \theta \otimes \left( \frac{\id - \Psi_K}{K^2 - 1} \right),
\end{equation}
with $K \in  \mathbb{Z}^+ \}$, $\theta, \rho_{\varepsilon} \in {\cal D}({\cal H})$, $F(\rho, \rho_\eps) \ge 1 - \eps$, and $\lambda \leq \varepsilon$, such that
\begin{equation}
E_{\max}^{\varepsilon}(\rho \otimes \Psi_K) \ge E_{\max}(\mu_{\eps}).
\end{equation}
\end{lemma}

\begin{proof}
Let $\mu_{\varepsilon}'$ be such that $E_{\max}^{\varepsilon}(\rho \otimes \Psi_K) = E_{\max}(\mu_{\varepsilon}')$. Then there is a separable state $\sigma$ such that
\begin{equation} \label{newlemmaeq1}
\mu_{\varepsilon}' \leq 2^{E_{\max}^{\varepsilon}(\rho \otimes \Psi_K)}\sigma
\end{equation}
and $F(\mu_{\varepsilon}', \rho \otimes \Psi_K) \geq 1 - \varepsilon$. 
Consider the twirling map 
\begin{equation}
\Delta(X) := \int_{Haar} dU (U \otimes U^*) X (U \otimes U^*)^{\cal y}
\end{equation}
and define $\mu_{\eps} := (\id \otimes \Delta)(\mu_{\varepsilon}')$. Then, because $\Delta$ is entanglement breaking \cite{ruskai_shor} we can write 
\begin{equation}\label{mu_e}
\mu_{\varepsilon} := (1 - \lambda) \rho_{\varepsilon} \otimes \Psi_K + \lambda \theta \otimes \left( \frac{\id - \Psi_K}{K^2 - 1} \right),
\end{equation}
for $\theta, \rho_{\varepsilon} \in {\cal D}({\cal H})$ and $0 \leq \lambda \leq 1$. From Eq. (\ref{newlemmaeq1}), 
\begin{equation}
\mu_{\varepsilon} \leq 2^{E_{\max}^{\varepsilon}(\rho \otimes \Psi_K)} (\id \otimes \Delta)\sigma.
\end{equation}
Since $\Delta$ is LOCC, $(\id \otimes \Delta)\sigma$ is separable and we get $E_{\max}(\mu_{\varepsilon}) \leq E_{\max}^{\varepsilon}(\rho \otimes \Psi_K)$. Moreover, from the monotonicity of the fidelity under CPTP maps, $F(\mu_{\varepsilon}, \rho \otimes \Psi_K) \geq 1 - \varepsilon$. From this and (\ref{mu}) it follows that
$$(1-\lambda) \ge F(\rho, \rho_\eps) \ge 1 - \eps,$$ and thus, $\lambda \leq \varepsilon$. 
\end{proof}

\begin{proof}[Theorem~\ref{emax}]
Let us start by proving the achievability part, namely that for every $\delta > 0$ we can find a positive integer $K$ such that ${\widetilde{E}}_{C, \delta-{\rm{SEPP}}}^{(1), \eps}(\rho) \le E_{\max}^{\eps}(\rho \otimes \Psi_K) - \log(1 - \eps) - \log K$. 

From Lemma \ref{newlemmads} we know there is a state $\rho_{\eps}$ such that $F(\rho_{\eps}, \rho) \geq 1 - \eps$ and $E_{\max}(\rho_{\eps} \otimes \Psi_K) \leq E_{\max}^{\eps}(\rho \otimes \Psi_K) - \log(1 - \eps)$. This
can be seen as follows: {{Let $\mu_\eps$ be a state of the form given 
by \reff{mu}}}. From the definition of the max-relative entropy of entanglement (Definition \ref{entmeasure}) it follows that
\begin{eqnarray}
\mu_\eps &\le & 2^{E_{\max}(\mu_\eps)} \sigma',\nonumber\\
&\le & 2^{E_{\max}^{\varepsilon}(\rho \otimes \Psi_K)}\sigma'.
\eea 
for some separable state $\sigma' \in {\cal{B}}({\cal{H}})$, where we get the second inequality
by using Lemma \ref{newlemmads}. Substituting the expression \reff{mu} of $\mu_\eps$ we get 
\bea
&&(1 - \lambda) \rho_{\varepsilon} \otimes \Psi_K + \lambda \theta \otimes \left( \frac{\id - \Psi_K}{K^2 - 1} \right)\nonumber\\
&\le & 2^{E_{\max}^{\varepsilon}(\rho \otimes \Psi_K)}\sigma'.
\eea
This yields,
\be 
(1 - \lambda) \rho_{\varepsilon} \otimes \Psi_K \le 2^{E_{\max}^{\varepsilon}(\rho \otimes \Psi_K)}\sigma',
\ee
and hence, 
$$\rho_{\varepsilon} \otimes \Psi_K\le 2^{E_{\max}^{\varepsilon}(\rho \otimes \Psi_K)}2^{-\log(1-\lambda)}
\sigma',$$
which in turn implies that 
$$\rho_{\varepsilon} \otimes \Psi_K\le 2^{E_{\max}^{\varepsilon}(\rho \otimes \Psi_K) - \log(1 - \eps)}
\sigma',$$
since $\lambda \le \eps$. Therefore, for $K = \lceil 1 + \delta^{-1}  \rceil$ and $M = \lceil K^{-1} 2^{ E_{\max}^{\eps}(\rho \otimes \Psi_K) - \log(1 - \eps)} \rceil$, we can always find a state $\pi$ such that $\bigl((\rho_\eps \otimes \Psi_K) + (MK-1) \pi\bigr)$ is an unnormalized separable state.

Define the map
\begin{eqnarray}
\Lambda(\omega) &=& \bigl[  \tr((\Psi_M \otimes \Psi_K)  \omega) \bigr]
\bigl(\rho_{\eps} \otimes \Psi_{K}) \nonumber \\ &+&
\bigl[\tr((\id - \Psi_M \otimes \Psi_K) \omega)\bigr]\pi,
\label{qop2}
\end{eqnarray}

We now show that with our choice of parameters the map $\Lambda$ is $\delta$-SEPP. 
First note that since for any
separable state $\sigma  \in {\cal{B}}({\cal{H}} \otimes {\cal{H}})$ 
$$\tr\bigl( (\Psi_M \otimes \Psi_K) \sigma\bigr) \le \frac{1}{MK},$$
we can write 
\be
\Lambda(\sigma) = p (\rho_\eps \otimes \Psi_K) + (1-p) \pi,
\label{15}
\ee
where $p \le \frac{1}{MK}$. This in turn can be written as
\be 
\Lambda(\sigma) = q\bigl[\frac{ (\rho_\eps \otimes \Psi_K) + (MK-1)) \pi}{MK}\bigr]
+ (1-q)\pi,
\label{e27}
\ee
where $q = pMK$. Since $0\le p \le 1/MK$, we have that $0\le q \le 1$.
Note that the first term in parenthesis in \reff{e27} 
is separable, due to the choice of $\pi$. Using the
convexity of the global robustness we then conclude that 
$R_G(\Lambda(\sigma)) \le R_G(\pi)$, for any separable 
state $\sigma$.

Further, from the choice of $M$ and $K$ it follows that
$$R_G(\pi) \le \frac{1}{R_G(\rho_\eps \otimes \Psi_K) } \le \frac{1}{K-1}\le \delta.$$
The first inequality follows from the fact that if $(\rho + s \sigma)$ is an unnormalized separable 
state, then so is $(\sigma + (1/s) \rho)$, and by noting that
$$\frac{\rho+s\sigma}{1+s} = \frac{\sigma + s^{-1}\rho}{1 + s^{-1}}.$$
The second inequality follows from the monotonicity of $R_G$ under LOCC \cite{VT}, which implies $R_G(\rho_{\eps}\otimes \Psi_K) \geq R_G(\Psi_K)$ and the fact $R_G(\Psi_K) = K - 1$ \cite{VT}. Finally, the third is a consequence of the choice of $K$. 

Note that for $\omega = \Psi_M\otimes \Psi_K$,
\be\tlambda(\tomega) = \Lambda (\Psi_M\otimes \Psi_K )
= \rho_\eps\otimes \Psi_{K}.
\label{two}
\ee
Hence the protocol yields a state $\rho_\eps$ with $F(\rho, \rho_\eps) \ge 1 - \eps$ and the additional maximally entangled state $\Psi_{K}$ which was employed in the start of the protocol. Its role in the protocol is to ensure that the quantum operation $\tlambda$ is a $\delta$-SEPP 
map for any given $\delta >0$.
Since the maximally entangled states $\Psi_M$ and $\Psi_{K}$ were employed in the protocol
and $\Psi_{K}$ was retrieved unchanged, the rate
$
R = (\log M + \log M') - \log M'= \log M \leq E^\eps_{\max}(\rho \otimes \Psi_K) - \log K  - \log(1 - \eps) + 1,$
is achievable. 

Next we prove the bound ${\widetilde{E}}_{C, \delta-{\rm{SEPP}}}^{(1), 0} \ge E_{\max}^\eps(\rho) - \log K -\log (1+\delta)$. 
Let 
$\tlambda$ be a $\delta$-SEPP map for which
$$\Lambda(\Psi_M \otimes \Psi_K) = \rho_\eps \otimes \Psi_{K}.
$$
with ${\widetilde{E}}_{c, \delta-{\rm{SEPP}}}^{(1), \eps}= \log M$.

Then by Lemma \ref{monoemax},
\bea
E_{\max}^\eps(\rho \otimes \Psi_K) & \le & E_{\max}(\rho_\eps \otimes \Psi_K) \nonumber\\
&=&  E_{\max}(\Lambda(\Psi_M \otimes \Psi_K)) \nonumber\\
&\le & E_{\max}(\Psi_M \otimes \Psi_K) + \log (1+\delta) \nonumber\\
&=& \log M + \log K + \log (1+\delta).
\eea
Hence 
\be
\log M \ge  E_{\max}^\eps(\rho \otimes \Psi_K) - \log K - \log(1 + \delta).
\ee
\end{proof}

\section{Equivalence with the regularized relative entropy of entanglement}
\label{eq}
In this section we prove Theorem \ref{thmeq}. The main ingredient in the proof is a certain generalizaton of Quantum Stein's Lemma proved in Refs. \cite{Bra, BP09a} and stated below as 
Lemma \ref{BP09} for the special case of the separable states set.

\begin{lemma} \label{BP09}
Let $\rho \in {\cal D}({\cal H})$. Then 

(\textit{Direct part}): For every  $\eps > 0$ there exists a sequence of POVMs $\{ A_n, \id - 
A_n \}_{n \in \mathbb{N}}$ such that
\begin{equation}
        \lim_{n \rightarrow \infty} \tr((\id - A_n) \rho^{\otimes n}) = 0 
\end{equation}
and for every $n \in \mathbb{N}$ and $\omega_n \in {\cal S}({\cal H}^{\otimes n})$,
\begin{equation} \label{exponentialdecay}
        - \frac{\log \tr(A_n \omega_n)}{n} + {\eps} \geq 
        E_{\cal M}^{\infty}(\rho) .
\end{equation}

(\textit{Strong Converse}): If a real number $\eps > 0$ and a sequence of POVMs 
$\{ A_n, \id - A_n \}_{n \in \mathbb{N}}$ are such that for every $n \in \mathbb{N}$ and $\omega_n \in {\cal S}({\cal H}^{\otimes n})$,
\begin{equation}
         - \frac{\log( \tr(A_n \omega_n))}{n} - \eps \geq 
         E_{\cal M}^{\infty}(\rho),
\end{equation}
then
\begin{equation}
        \lim_{n \rightarrow \infty} \tr((\id - A_n) \rho^{\otimes n}) = 1. 
\end{equation}
\end{lemma}

\begin{proof}
(Theorem 4). In Refs. \cite{Bra, BP09a, nd2} it was established that
\begin{equation} \label{maxeq}
{\cal E}_{\max}(\rho) = E_R^{\infty}(\rho).
\end{equation}
We hence focus in showing that ${\cal E}_{\min}(\rho) \geq E_R^{\infty}(\rho)$, since ${\cal E}_{\min}(\rho) \leq E_R^{\infty}(\rho)$ follows from Eq. (\ref{maxeq}) and the fact that ${\cal E}_{\max}(\rho) \geq {\cal E}_{\min}(\rho)$ (which in turn is a direct consequence of their definitions). Let $\eps > 0$ and $\{ A_n \}$ be an optimal sequence of POVMs in the direct part of Lemma \ref{BP09}. Then for sufficiently large $n$, $\tr(\rho^{\otimes n}A_n) \geq 1 - \eps$ and thus
\begin{eqnarray}
{E}_{\min}^\eps (\rho^{\otimes n}) \geq  \min_{\sigma \in {\cal S}({\cal H}^{\otimes n})} \left(- \log \tr(A_n \sigma) \right) \geq n (E_R^{\infty}(\rho) - \eps),
\end{eqnarray}
where the last inequality follows from Eq. (\ref{exponentialdecay}). Dividing both sides by $n$ and taking the limit $n \rightarrow \infty$ we get
\begin{equation}
{\cal E}_{\min}^\eps (\rho) \geq  E_R^{\infty}(\rho) - \eps.
\end{equation}
Since this equation holds for every $\eps > 0$, we can finally take the limit $\eps \rightarrow 0$ to find
\begin{equation}
{\cal E}_{\min}(\rho) \geq E_R^{\infty}(\rho).
\end{equation}
\end{proof}

\section*{Acknowledgments}

The authors would like to thank Martin Plenio for many interesting discussions on the theme of this paper. 
This work is part of the QIP-IRC supported by the European Community's Seventh Framework Programme
(FP7/2007-2013) under grant agreement number 213681, EPSRC (GR/S82176/0) as well as the 
Integrated Project Qubit Applications (QAP) supported by the IST directorate 
as Contract Number 015848'. FB is supported by an EPSRC Postdoctoral Fellowship for Theoretical Physics.

\end{document}